\theoremstyle{plain}
\newtheorem{theorem}{Theorem}
\newtheorem{lemma}{Lemma}
\theoremstyle{definition}
\theoremstyle{remark}
\newcommand{\numberthis}{\addtocounter{equation}{1}\tag{\theequation}}
\newcommand{\vect}[1]{\mathbf{#1}}
\newcommand{\wtot}{W_{\mathrm{tot}}}
\newcommand{\prob}{\mathbb{P}}
\DeclareMathOperator{\Exp}{\mathbb{E}}
\newcommand{\tfrm}{T_{\mathrm{fr}}}
\title{Second-best Beam-Alignment via\\ Bayesian Multi-Armed Bandits}
\author{Muddassar Hussain, Nicol\`{o} Michelusi
\thanks{This research has been funded by NSF under grant CNS-1642982.}
\thanks{Authors are with the School of Electrical and Computer Engineering, Purdue University. email: \{hussai13,michelus\}@purdue.edu.}
\vspace{-7mm}
}
\begin{document}

%\author{\IEEEauthorblockN{\IEEEauthorrefmark{1},
%\IEEEauthorrefmark{1}, \IEEEauthorrefmark{1}
%\IEEEauthorblockA{\IEEEauthorrefmark{1} \\
%}
%\IEEEauthorblockA{\IEEEauthorrefmark{2}\\
%}
%\IEEEauthorblockA{\IEEEauthorrefmark{3}
%\\
%}
%Email: @purdue.edu, @purdue.edu, @purdue.edu}
%}

\maketitle
\thispagestyle{empty}
\pagestyle{empty}
\setulcolor{red}
\setul{red}{2pt}
\setstcolor{red}
\begin{abstract}
Millimeter-wave (mm-wave) systems rely on narrow-beams to cope with the severe signal attenuation in the mm-wave frequency band. However, susceptibility to beam mis-alignment due to mobility or blockage requires the use of beam-alignment schemes, with huge cost in terms of overhead and use of system resources. In this paper, a beam-alignment scheme is proposed based on Bayesian multi-armed bandits, with the goal to maximize the alignment probability and the data-communication throughput. A Bayesian approach is proposed, by considering the state as a posterior distribution over angles of arrival (AoA) and of departure (AoD), given the history of feedback signaling and of beam pairs scanned by the base-station (BS) and the user-end (UE). A simplified sufficient statistic for optimal control is identified, in the form of preference of BS-UE beam pairs. By bounding a value function, the \emph{second-best preference} policy is formulated, which strikes an optimal balance between exploration and exploitation by selecting the beam pair with the current \emph{second-best} preference. Through Monte-Carlo simulation with analog beamforming, the superior performance of the \emph{second-best preference} policy is demonstrated in comparison to existing schemes based on \emph{first-best preference}, linear Thompson sampling, and upper confidence bounds, with up to 7\%, 10\% and 30\% improvements in alignment probability, respectively.
\end{abstract}
\begin{IEEEkeywords}
Millimeter-wave, beam-alignment, multi-armed bandits, Markov decision process
\end{IEEEkeywords}
\vspace{-2mm}
\section{Introduction}
Millimeter-wave (mm-wave) technology has emerged as a promising solution to meet the demands of future communication systems supporting high capacity and mobility, thanks to abundant bandwidth availability \cite{channel_model}. 
However, high isotropic path loss and sensitivity to blockages pose challenges in the design of these systems \cite{rappaport_mmwave_book}.
 To overcome the severe signal attention, 
mm-wave systems leverage narrow-beam communications, by using large antenna arrays at base stations (BSs) and user-ends (UEs).
However, narrow beams are highly susceptible to mis-alignment due to mobility and blockage, hence they require utilization of beam-alignment schemes,
which may cause huge overhead. 
\par 
Therefore, the design of beam-alignment schemes
with minimal overhead is of paramount importance, and has been a subject of intense research.   One of the earliest  yet most popular schemes is \emph{exhaustive search} \cite{exhaustive}, which scans sequentially through all possible BS-UE beam pairs and selects the one with maximum signal power for data communications.
To reduce the delay of exhaustive search, \emph{iterative} search is proposed in \cite{iterative}, where scanning is first performed using wider beams, followed by refinement using narrow beams. In the aforementioned heuristic schemes, the optimal design is not considered. To address this challenge, in our previous papers \cite{ita2017,ita2018,icc2018,TWC2019}, we considered the optimal design of interactive beam-alignment protocols that utilize  1-bit feedback from UEs.
 In \cite{ita2017,ita2018}, we design a throughput-optimal beam-alignment scheme for a single UE and two UEs, respectively, and we prove the optimality of a \emph{bisection search};
in \cite{icc2018}, we optimize the trade-off between data communication and beam-sweeping  in a mobile scenario where the BS widens its beam to mitigate the uncertainty on the UE position;  in \cite{TWC2019}, we incorporate the energy cost of beam-alignment, and prove the optimality of a \emph{fractional search} method. \emph{In our aforementioned papers \cite{ita2017,ita2018,icc2018,TWC2019},  the optimal design is carried out under the restrictive assumption of error-free single-bit feedback. However, this assumption may not hold in the presence of significant side-lobe gain and/or low signal-to-noise ratio (SNR)}.

The case of erroneous or noisy feedback is considered in recent work \cite{MAB,Javidi}, and our work \cite{allerton2018}. 
A coded beam-alignment scheme is proposed  in \cite{allerton2018} to correct these errors, but with no consideration of feedback to improve beam-selection. A multi-armed bandit (MAB) formulation based on upper confidence bound (UCB) is proposed in \cite{MAB}, by selecting the beam based on the empirical SNR distribution.
A hierarchical beam-alignment scheme based on posterior matching is proposed in \cite{Javidi}: therein, a \emph{first-best} policy is formulated, which selects the most likely beam pair based on the posterior distribution on the AoA-AoD pair. However, as we will see numerically, both UCB and first-best policies are prone to errors due to under-exploration of the beam space. 

 \par In this paper, we propose a beam-alignment design with the goal to maximize the alignment probability and the average throughput during the data communication phase. We pose the problem as a Markov decision process (MDP), where the beam pair is chosen based upon the \emph{belief} over the AoA-AoD pair, given the history of scanned beams and the received signal power. We identify a simplified sufficient statistic in the form of preference of the AoA-AoD beam pairs. We derive lower and upper bounds to the value function,
  based on which we propose a heuristic policy which selects the beam pair with the \emph{second-best} preference. We show numerically that this policy strikes a favorable trade-off between exploration and exploitation: instead of greedily choosing the beam corresponding to the most likely AoA-AoD pair (\emph{first-best} \cite{Javidi}), it chooses the second most likely one, leading to better exploration; at the same time, it avoids wasting precious resources to scan unlikely beam pairs, leading to better exploitation than other MAB techniques, such as linear Thompson sampling (LTS) \cite{Sutton} and UCB \cite{MAB}.
   The proposed \emph{second-best} scheme is shown to outperform
   \emph{first-best} \cite{Javidi}, LTS-based \cite{Sutton} and UCB-based \cite{MAB} schemes by up to 7\%, 10\% and 30\% in alignment probability, respectively. 
 \par The rest of the paper is organized as follows. In Sec.~\ref{sec:sys_mod}, we present the system model. In Sec.~\ref{sec:prob}, we formulate the problem and our proposed solution strategy. In Sec.~\ref{sec:numer}, we present numerical results, followed by final remarks in Sec.~\ref{sec:conc}. 
% \vspace{-2mm}
\section{System Model}
\label{sec:sys_mod}
We consider a downlink scenario with one BS and one UE, as depicted in Fig.~\ref{fig:sys_mod}. Time is divided into frames of duration
$\tfrm{=}T_{\mathrm{s}}N$, each with $N$ slots of duration
 $T_{\mathrm{s}}$. The frame is partitioned into two phases: a \emph{beam-alignment phase} of duration $LT_{\mathrm{s}}$ ($L{<}N$ slots), followed by a \emph{downlink data communication phase}, of duration $(N{-}L)T_{\mathrm{s}}$. Each beam-alignment slot  is further partitioned into a \emph{pilot transmission phase}, of duration $T_{\mathrm{pt}}$, followed by a \emph{feedback phase}, of duration $T_{\mathrm{fb}}$, with $T_{\mathrm{s}}{=}T_{\mathrm{pt}}{+}T_{\mathrm{fb}}$. These are detailed next.

\begin{figure}[!t]
\centering
\includegraphics[width=0.55\linewidth]{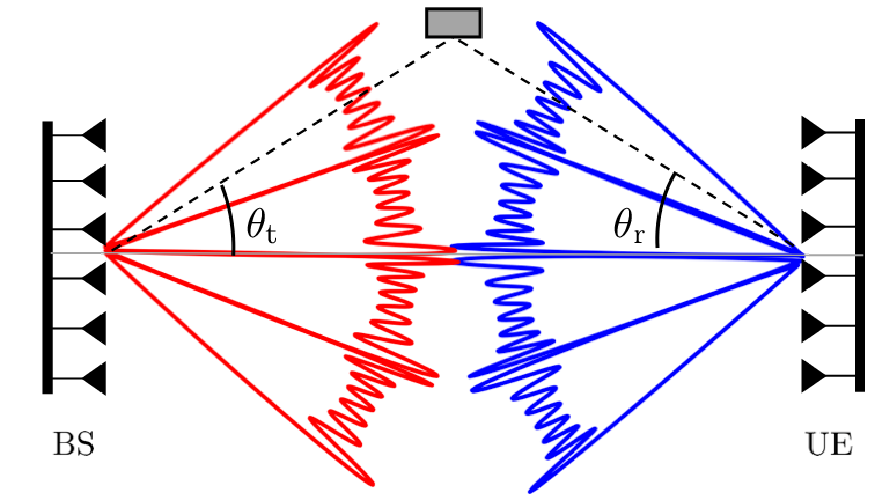}
\caption{System model; $M_{\mathrm t} = M_{\mathrm r} =  128$; beamforming algorithm in \cite{noh}. 
}
\label{fig:sys_mod}
\vspace{-6mm}
\end{figure}

The BS and UE are equipped with uniform linear arrays (ULAs) with $M_{\mathrm t}$ and $M_{\mathrm r}$ antenna elements, respectively, and use analog beamforming. The signal received at the UE  is 
\begin{align}
 \label{eq:receivedsignal}
 \vect z_{k} = \sqrt{P_{\mathrm{tx},k}}\vect u_{k}^H \vect H_{k} \vect v_{k}  \vect s + {\vect w_{k}},\ \forall k \in\{0,1,\dots,N-1\},
\end{align}
 where $P_{\mathrm{tx,k}}$ is the average transmit power of the BS;
 {$\mathbf s\in\mathbb C^{S}$}
  is the transmitted signal with $S$ symbols with $\Exp[\lVert\vect s\rVert_2^2]={S}$;
$\vect H_{k}{\in}\mathbb C^{M_{\mathrm r} \times M_{\mathrm t}}$ is the channel matrix; $\vect v_{k}{\in}\mathbb C^{M_{\mathrm t}}$ is the BS beamforming vector with
$\|\vect v_{k}\|_2^2=1$; $\vect u_{k}{\in}\mathbb C^{M_{\mathrm r}}$ is the UE combining vector with $\| \vect u_{k}\|_2^2=1$; $\vect w_{k}{\sim}\mathcal{CN}(\boldsymbol 0,N_0W_{\mathrm{tot}} \vect I)$ is additive white Gaussian noise (AWGN), with one-sided power spectral density  $N_0$ and system bandwidth $W_{\mathrm{tot}}$.
\\
%%%%%%%%%%%%
\textbf{Channel Model:} We use the extended Saleh-Valenzuela geometric model with a single-cluster \cite{extended}, as adopted in several previous works (e.g., see \cite{zorzi,jeffery_andrew_ba,TWC2019}). In fact, typical mm-wave channels have been shown to exhibit one dominant cluster containing most of the signal energy \cite{rappaport_channel_model}.  The single-cluster channel is modeled as
\begin{align}
\label{channel}
\vect H_k = \alpha_k \vect a_{\mathrm r}(\theta_{\mathrm r,k})\vect a_{\mathrm t}^H(\theta_{\mathrm t,k}), 
\end{align}
where  $\theta_{k}{\triangleq}(\theta_{\mathrm r,k},\theta_{\mathrm t,k}){\in}[-\frac{\pi}{2},\frac{\pi}{2}]^2$ is the angle of arrival (AoA) and angle of departure (AoD) pair associated to the dominant cluster, with complex fading gain $\alpha_{k}$; $\vect a_{\mathrm r}$ and $\vect a_{\mathrm t}$ are the UE and BS array response vectors, respectively, defined as
\begin{align*}
\vect a_{\mathrm x }(\theta_{\mathrm x }){=}\frac{1}{\sqrt{M_{{\mathrm x}}}}\left[1, e^{j\frac{2\pi d_{\mathrm x}}{\lambda}\psi_{\mathrm x}},\cdots, e^{j(M_{{\mathrm x}}-1)\frac{2\pi d_{\mathrm x}}{\lambda}\psi_{\mathrm x}}\right]^{\top}\!\!\!\!\!,
\mathrm x{\in}\{{\mathrm t },{\mathrm r}\},
\end{align*}
where $\psi_{\mathrm x}{=}\sin\theta_{\mathrm x }$, $d_{\mathrm x}$ is the antenna spacing, $\lambda{=}c/f_c$ is the wavelength at carrier frequency $f_c$, $c$ denotes the speed of light.
  We assume that during the duration of one frame $\tfrm$, $\theta_k$ remains unchanged,  $\theta_k{=}\theta$,  and $\alpha_k$ are i.i.d. Reyleigh fading in each slot with distribution $\alpha_k{\sim}\mathcal{CN}(0,\ell(d)^{-1})$, 
  where $\ell(d)$ is the path loss at distance $d$ from the BS. In fact, 
  %the channel gain coherence time is much shorter than the beam coherence time, implying that
   the AoA-AoD pair change much slower than the channel gain \cite{beam_coherence_time}. \\
  \textbf{Codebook structure:}
 In slot $k$, the BS uses the beamforming vector $\vect v_k{\in}\mathcal V$  and the UE uses the combining vector $\vect u_k{\in}\mathcal U$, from the codebooks $\mathcal V$ and $\mathcal U$, respectively. 
 We assume a sectored model \cite{TWC2019}, in which the AoA and AoD spaces are partitioned into sectors of equal beamwidth
 (as shown in Fig.~\ref{fig:sys_mod} for the case of four sectors, this model approximates well analog beamforming).
 Accordingly, 
 let $\mathcal B_{\mathrm r}(\vect u){\subseteq}[-\frac{\pi}{2},\frac{\pi}{2}]$ and $\mathcal B_{\mathrm t}(\vect v){\subseteq}[-\frac{\pi}{2},\frac{\pi}{2}]$
   denote the AoA and AoD supports of the UE combiner and BS beamformer vectors $\vect u{\in}\mathcal U$ and $\vect v{\in}\mathcal V$, respectively,
   with equal beamwidth $|\mathcal B_{\mathrm r}(\vect u)|{=}\frac{\pi}{|\mathcal U|},{\forall}\vect u{\in}\mathcal U$ and  $|\mathcal B_{\mathrm t}(\vect v)|{=}\frac{\pi}{|\mathcal V|},{\forall}\vect v{\in}\mathcal V$,  where $|\mathcal B|$ denotes the measure $|\mathcal B|{\triangleq}\int_{\mathcal B}\mathrm d x$.
    We define $\mathcal B(\vect u, \vect v){\triangleq}\mathcal B_{\mathrm r}(\vect u){\times}\mathcal B_{\mathrm t}(\vect v)$ as the joint AoA-AoD support of $(\vect u,\vect v)$. We assume that the angular supports are mutually orthogonal and form a partition of the entire AoA-AoD space $[-\frac{\pi}{2},\frac{\pi}{2}]^2$, i.e., $\mathcal B(\vect u,\vect v){\cap}\mathcal B(\tilde{\vect u},\tilde{\vect v}){=}\emptyset,\forall (\vect u,\vect v){\neq}(\tilde{\vect u},\tilde{\vect v})$ and
 $\cup_{\vect u\in\mathcal U}\mathcal B_{\mathrm r}(\vect u){=}
 \cup_{\vect v\in\mathcal V}\mathcal B_{\mathrm t}(\vect v){=}
 [-\frac{\pi}{2},\frac{\pi}{2}]$.
 Let $(\vect u^{(i)},\vect v^{(i)}),i\in \mathcal I{\triangleq}\{1,2,\ldots,|\mathcal U || \mathcal V| \} $ be any ordering of
 combining and beamforming vectors, and 
 $\mathcal B^{(i)} \triangleq \mathcal B(\vect u^{(i)},{\vect v}^{(i)})$ be their support.
  Let $A_k{\in}\mathcal I$ be the beam index of the  combining and beamforming vectors scanned in slot $k$, so that $(\vect u_k,\vect v_k){=}(\vect u^{(A_k)},\vect v^{(A_k)})$.
 Let $X$ be a discrete random variable denoting the index of the support that the AoA-AoD pair $\theta$ of the channel belongs to, so that $\theta{\in}\mathcal B^{({X})}$.
 Then, {from \eqref{eq:receivedsignal}-\eqref{channel}, the received signal} can {be} expressed as\footnote{The phase of
  $\vect u_{k}^H\vect a_{\mathrm r}(\theta_{\mathrm r})\vect a_{\mathrm t}^H(\theta_{\mathrm t})\vect v_{k}$
   is incorporated {into} $\alpha_k$.}
 \begin{align}
\vect z_k \approx \sqrt{P_{\mathrm{tx,k}}}\alpha_k \Big[(\sqrt{G}-\sqrt{g}){\delta[A_k,X]}{+} \sqrt{g}\Big]\vect s + \vect w_k,
 \end{align}
where
$\delta[\cdot]$ is the Kronecker's delta function, equal to $1$ if alignment is achieved ($A_k{=}X$), equal to $0$ otherwise ($A_k{\neq} X$);
 $G$ and $g$ are{, respectively,} the main and side lobe gains {of the sectored model}, expressed as
  $$
   G= \min_{(\theta_{\mathrm r},\theta_{\mathrm t})\in \mathcal B(\vect u^{(i)},\vect v^{(i)} )}|\vect a_{\mathrm r}(\theta_{\mathrm r})^H\vect u^{(i)}|^2
 |\vect a_{\mathrm t}^H(\theta_{\mathrm t})\vect v^{(i)}|^2{,\ \forall i,}
 $$
  $$
   g= \max_{(\theta_{\mathrm r},\theta_{\mathrm t})\not \in \mathcal B(\vect u^{(i)},\vect v^{(i)} )} |\vect a_{\mathrm r}(\theta_{\mathrm r})^H\vect u^{(i)}|^2
 |\vect a_{\mathrm t}^H(\theta_{\mathrm t})\vect v^{(i)}|^2{,\ \forall i.}
 $$
In the following, we describe the beam-alignment  and data communication procedures.
\\
 \textbf{Beam-Alignment:} In each slot $k$ of the beam-alignment phase, the BS transmits a pilot sequence $\vect s$ using the beam index $A_k$, with transmit power $P_{\mathrm{tx},k}{=}P_{\mathrm{ba}}$.
  Upon receiving $\vect z_k$ (based on the combining vector with index $A_k$), the UE uses a matched filter to compute the signal strength and sends the 
  \emph{normalized received power}
  feedback signal $Y_k$ back to the BS, of the form
 \begin{align}
   Y_k = 
   \frac{|\vect s^H\vect z_k|^2}{\|\vect s\|^2 N_0 \wtot(1+\Lambda g)},
   \end{align} 
   where $\Lambda{\triangleq}\frac{P_{\mathrm{ba}} \|\vect s\|^2}{N_0 \wtot \ell(d)}$ is the pre-beamforming receive SNR during beam-alignment.
Then, the probability density function (pdf) of $Y_k$ conditional on $(X,A_k){=}(x,a_{\mathrm s})$
 is given by
  \begin{align*}
  \label{eq:feedback_pdf}
  &f(Y_k{=}y|X{=}x;A_k{=}a_{\mathrm s}){=}\left[\nu e^{-\nu y} \right]^{\delta[a_{\mathrm s},x]}[e^{-y}]^{1-\delta[a_{\mathrm s},x] },\numberthis
  \end{align*}
  where $1/\nu$ is the mean signal power in case of alignment, with
  \begin{align}
  \nu \triangleq \frac{1+g\Lambda}{1+G\Lambda}.
  \end{align}
  The BS uses a Bayesian approach to select $A_k$: 
starting from $\mathcal H_0{\triangleq}\emptyset$ and
  given
 the history of feedback and scanned beam indices {$\mathcal H_{k}{\triangleq}\{ (A_j,Y_j)\}_{j=0}^{k-1}$}, the next beam index {$A_k$} is selected. This procedure continues until the end of the beam-alignment phase.\\
\textbf{Data communication:} Upon completion of the beam-alignment phase, given the history of feedback and actions $\mathcal H_{L}$, the BS selects the data communication parameters: beam index for data communication $A_{\mathrm d}{{\in}\mathcal I}$, transmission power $P_{\mathrm d}{\in}[0,P_{\max}]$, and data rate  $R_{\mathrm d}{\geq}0$. These  parameters are used until the end of the data communication phase.
 
 Let $b_0[x]$
 be the \emph{prior belief} over $X{=}x$ (or equivalently over $\theta{\in}\mathcal B^{(x)}$) available at the beginning of the beam-alignment phase.
We define the expected rate during the communication phase (normalized by the frame duration), as
\begin{align*}
\label{xcfcx}
&\bar R(A_{\mathrm d},P_{\mathrm d},R_{\mathrm d}|b_0,\mathcal H_{L})
\\&\qquad
\triangleq \frac{\tfrm-L T_{\mathrm{s}}}{\tfrm} \ 
\mathbb P \left(X=A_{\mathrm d}\Bigl | b_0,\mathcal H_{L}\right)
\hat R(R_{\mathrm d},P_{\mathrm d}), \numberthis
\end{align*}
where we have defined
\begin{align}
\label{dfgh}
&\hat R(R_{\mathrm d},P_{\mathrm d})
{\triangleq}R_{\mathrm d} \prob\left[ R_{\mathrm d}{\leq}\wtot \log_2 \left( 1+ \frac{|\alpha_k|^2 P_{\mathrm d} G}{N_0\wtot}\right)  \right]. \numberthis
\end{align} 
The  probability term in \eqref{xcfcx} is the probability of achieving correct alignment, given the prior $b_0$ and the history $\mathcal H_{L}$ of feedback and actions  during the beam-alignment phase,
whereas the
 probability term in \eqref{dfgh}
 denotes the probability of non-outage with respect to the realization of the fading process (i.i.d. over time), given that correct alignment has been achieved (we assume that mis-alignment yields outage with probability one, since $g\ll G$).
 \vspace{-3mm}
\section{Problem Formulation and Solution}
\label{sec:prob}
We now formulate the beam-alignment and data communication problem in the context of a decision process.
We define a policy $\mu$, part of our design, which operates as follows.
At time $k$ during beam-alignment,
given the history of feedback and actions $\mathcal H_k$, the BS
selects the \emph{beam-alignment action} $A_k{=}a_{\mathrm s}{\in}\mathcal I$ with probability  $\mu_k(a_{\mathrm s}|\mathcal H_k)$; given $\mathcal H_{L}$, the BS selects the 
data communication parameters as $(A_{\mathrm d},P_{\mathrm d},R_{\mathrm d})=\mu_{\mathrm d}(\mathcal H_{L})$.
The goal is to design $\mu$ so as to maximize 
the expected communication rate, i.e.,
\begin{align*}
\textbf{P0: }\max_\mu \mathbb E_\mu\left[\bar R(A_{\mathrm d},P_{\mathrm d},R_{\mathrm d}|b_0,\mathcal H_{L})|b_0\right],
\end{align*}
where the expectation $\mathbb E_\mu$ is conditional on the prior belief $b_0$ and on the
policy $\mu$ being executed during beam-alignment and data communication. 
Note that, using \eqref{xcfcx}, we can rewrite the optimization problem as
\begin{align*}
\textbf{P1: }&\max_\mu \mathbb E_\mu\left[
\mathbb P \left(X=A_{\mathrm d}\Bigl | b_0,\mathcal H_{L}\right)\Biggr|b_0\right]
\\&\times
\frac{\tfrm-L T_{\mathrm{s}}}{\tfrm}
\max_{R_d\geq 0,0 \leq P_d \leq P_{\max}}
\hat R(R_{\mathrm d},P_{\mathrm d})
,
\end{align*}
i.e., the problem can be decomposed into the following two independent problems:
1) find the optimal rate and power $(R_d^*,P_d^*)$ that maximize the expected rate in the communication phase, conditional on correct alignment being achieved ($X{=}A_{\mathrm d}$);
 2) find the optimal beam-alignment policy and the beam index for communication 
 $A_{\mathrm d}$ so as to maximize the probability of correct alignment.
 The first problem can be solved efficiently by maximizing \eqref{dfgh}.
 In the sequel, we consider the latter problem.

Let $b_k[x]{\triangleq} \prob (X= x|\mathcal H_k,b_0)$ be the belief over $X{=}x$ given the history of actions and feedback and prior belief $b_0$. It serves as  a sufficient statistic for optimal control for problem \textbf{P1}. In the following lemma, we present an equivalent simplified sufficient statistic along with its dynamics. 
\begin{lemma} Let $m_0[x]{{\triangleq}}\ln b_0[x]$ denote the prior preference of $X=x$. Given the action and feedback pair $(A_k,Y_k)$, 
 the belief at $k+1$ is updated as
\begin{align}
\label{eq:beliefup}
b_{k+1}[x] = \frac{\exp\{m_{k+1}[x]\}}{\sum_{l \in \mathcal I}\exp\{m_{k+1}[l]\} },
\end{align}
where
\begin{align}
\label{eq:mkup}
m_{k+1}[x] = m_k[x] + J(Y_k)\delta[A_k,x],\ \forall x\in \mathcal I,
\end{align}
and we have defined
\begin{align}
\label{eq:Jfun}
J(y) \triangleq (1-\nu)y +\ln \nu.
\end{align}

\end{lemma}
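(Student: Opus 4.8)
The plan is to argue by induction on $k$, combining Bayes' rule for the one-step belief update with the multiplicative structure of the feedback likelihood \eqref{eq:feedback_pdf}. For the base case, note that $m_0[x]=\ln b_0[x]$ gives $\exp\{m_0[x]\}=b_0[x]$ and $\sum_{l\in\mathcal I}\exp\{m_0[l]\}=\sum_{l\in\mathcal I}b_0[l]=1$, so $b_0[x]$ already satisfies the softmax representation \eqref{eq:beliefup} with preferences $m_0$.

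For the inductive step, I would first observe that the action $A_k$ is produced by the policy as a (possibly randomized) function of $\mathcal H_k$ alone, hence $A_k$ is conditionally independent of $X$ given $\mathcal H_k$, and $\prob(X=x\mid \mathcal H_k, A_k=a_{\mathrm s}, b_0)=b_k[x]$. Applying Bayes' rule with the conditional pdf \eqref{eq:feedback_pdf} of $Y_k$ given $(X,A_k)$ then yields
\begin{align*}
b_{k+1}[x]=\frac{f(Y_k{=}y\mid X{=}x;A_k{=}a_{\mathrm s})\,b_k[x]}{\sum_{l\in\mathcal I}f(Y_k{=}y\mid X{=}l;A_k{=}a_{\mathrm s})\,b_k[l]}.
\end{align*}

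Next I would rewrite the likelihood \eqref{eq:feedback_pdf} as $f(Y_k{=}y\mid X{=}x;A_k{=}a_{\mathrm s})=e^{-y}\exp\{J(y)\,\delta[a_{\mathrm s},x]\}$ with $J$ as in \eqref{eq:Jfun}; since the factor $e^{-y}$ does not depend on $x$, it cancels in numerator and denominator, leaving $b_{k+1}[x]\propto \exp\{J(Y_k)\,\delta[A_k,x]\}\,b_k[x]$. Substituting the inductive hypothesis $b_k[x]=\exp\{m_k[x]\}/\sum_{l\in\mathcal I}\exp\{m_k[l]\}$, the normalizer $\sum_{l}\exp\{m_k[l]\}$ cancels as well, and one obtains
\begin{align*}
b_{k+1}[x]=\frac{\exp\{m_k[x]+J(Y_k)\delta[A_k,x]\}}{\sum_{l\in\mathcal I}\exp\{m_k[l]+J(Y_k)\delta[A_k,l]\}},
\end{align*}
which is precisely \eqref{eq:beliefup} once we set $m_{k+1}[x]=m_k[x]+J(Y_k)\delta[A_k,x]$, establishing \eqref{eq:mkup}.

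The computation itself is routine manipulation of exponentials; the only step requiring care is the conditional-independence argument used to drop $A_k$ from the conditioning in the Bayes update — equivalently, the assertion that \eqref{eq:feedback_pdf} captures the entire dependence of the new observation $Y_k$ on the past through the pair $(X,A_k)$. I do not anticipate a genuine obstacle beyond stating this cleanly.
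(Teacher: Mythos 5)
Your proposal is correct and follows essentially the same route as the paper: induction on $k$, Bayes' rule combined with the conditional independence of $Y_k$ from the history given $(X,A_k)$ and of $X$ from $A_k$ given $\mathcal H_k$, and the rewriting of the likelihood \eqref{eq:feedback_pdf} as $e^{-y}\exp\{J(y)\delta[a_{\mathrm s},x]\}$ so that the $x$-independent factor cancels under normalization. No gaps; the argument matches the paper's proof step for step.
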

\begin{proof}
Given the belief $ b_k$ and $(A_k,Y_k)=(a_{\mathrm s},y)$, we have
\begin{align*}
\label{eq:beliefder}
&b_{k+1}[x]\stackrel{(a)}{=} \mathbb P(X=x|\mathcal H_{k+1})\\
&\stackrel{(b)}{\propto} f(Y_k= y|X = x, A_k =a_{\mathrm s} ,\mathcal H_k)\ \mathbb P(X=x |A_k =a_{\mathrm s},\mathcal H_k)\\
&\stackrel{(c)}{=}  f(Y_k= y|X = x, A_k =a_{\mathrm s}) b_k[x]\\
&\stackrel{(d)}{=}\left[\nu \exp\left\{ -\nu y\right\} \right]^{\delta[a_{\mathrm s},x]}[\exp\{-y\}]^{1-\delta[a_{\mathrm s},x]} b_k[x]
\\
&\stackrel{(e)}{=}\exp\left\{-y+J(y)\delta[a_{\mathrm s},x]\right\}b_k[x],
 \numberthis
\end{align*}
where (a) follows from the definition of belief; (b) follows from Bayes' rule and $\propto$ denotes proportionality up to a normalization factor independent of $x$;  (c) follows from the facts that $Y_k$ is independent of history $\mathcal H_k$ given $(X,A_k)$, 
and $X$ is independent of action $A_k$ given  $\mathcal H_k$,
 and by the definition of belief $b_k$; (d-e) follow by substitution of the pdf of $Y_k$ given in \eqref{eq:feedback_pdf} and by definition of $J(y)$.
We prove the lemma using induction.  The lemma holds for $b_0$ by definition of $m_0$. Let $0 \leq k \leq L-1$ and $b_k$ be given by \eqref{eq:beliefup}, then using 
\eqref{eq:beliefder}(e) normalized to sum to one, we get 
\begin{align*}
b_{k+1}[x] &= \frac{\exp\left\{ J(y){\delta[a_{\mathrm s},x]} \right\} \exp(m_k[x]\}}{\sum_{l \in \mathcal I}\exp\left\{ J(y) {\delta[a_{\mathrm s},l]} \right\}  \exp\{m_k[l]\}} \\
&= \frac{\exp\{m_{k+1}[x]\}}{\sum_{l\in \mathcal I}\exp\{m_{k+1}[l]\} },\numberthis
\end{align*}
where $m_{k+1}[x] $ is given by \eqref{eq:mkup}.
\end{proof}
 Let $\vect m_k \triangleq [m_k[1],\ldots m_k[{|\mathcal I|}]$. Then, the previous lemma demonstrates that $\vect m_k$ is a sufficient statistic for control decisions, since it is sufficient for computing the belief $b_k$ at time $k$. 
 Therefore, $\mu$ can be expressed as $A_k=\mu_k(\vect m_k) ,\ \forall 0 \leq k \leq L$, which maps the current preference vector $\vect m_k$ to beam index $A_{k}\in \mathcal I$.
 This result makes it possible to achieve an efficient implementation, since the belief can be updated according to simple preference update rules as in \eqref{eq:mkup}, rather than via complex Bayesian belief updates.
 In the subsequent analysis, we will use $\vect m_k$ rather than $b_k$ as the state.   
  \vspace{-4mm}
 \subsection{MDP Formulation}
 Thanks to the identification of the sufficient statistic $\vect m_k$,
 we model the optimization problem $\textbf{P1}$ as a Markov decision process (MDP) and optimize the decision variables to maximize the alignment probability in the data-communication phase. The MPD is a 5-tuple $\langle \mathcal T, \mathcal S, \mathcal I, f(\vect m_{k+1}|\vect m_k, a_k ), r_k(\vect m_k, a_k),\forall k{\in}\mathcal T \rangle$, with elements described as follows.\\
\begin{comment}$\mathcal T $ denotes the time horizon of the MDP, $\mathcal S$ denotes the state-space, ${\mathcal I}$ denotes the action-space at time slot $k\in \mathcal T$, $f(\vect m_{k+1}|\vect m_k,  a_k )$ denotes the ensemble of state-transition pdf, $r_k(\vect m_k, a_k): \mathcal S\times {\mathcal I} \to \mathbb R$ denotes the reward in slot $k$ obtained after enacting action $ a_k$ in state $\vect m_k$. We describe the MDP elements for our specific problem as follows.  \end{comment} 
 \textbf{Time Horizon:} given as
 $\mathcal T{=}\{0,1,\ldots,L \}$ where $\mathcal T_{\mathrm{BA}}{\equiv}\mathcal T{\setminus}\{L\}$ denote the slot indices associated with the beam-alignment phase, whereas at $k{=}L$, the communication parameters are selected and used until the end of the frame.
 \\
 \textbf{State space:} given as $
 \mathcal S = \mathbb R^{|\mathcal I|}$, i.e., all possible values of preference vectors $\vect m_k$.
  \\
 \textbf{Action space:} the set containing all the beam indices, {$\mathcal I$}.\\ 
 \textbf{State transition distribution: } 
 Given state $\vect m_k=\vect m$ and action $A_k = a_{\mathrm s}$
 used in the $k$th stage of the beam-alignment phase, 
 the feedback $Y_k=y$ is generated with pdf
\begin{align*}
\label{eq:y_k_pdf}
\numberthis
& f(y|\vect m,a_{\mathrm s})\triangleq
 \sum_{x\in\mathcal I} f(Y_k=y|X= x,A_k=a_{\mathrm s}) b_k[x]\\
 &=
\frac{\exp\{m[a_{\mathrm s}]\}}{\sum_{l\in\mathcal I}\exp\{m[l]\}} \nu e^{-\nu y}+ \left[1{-}\frac{\exp\{m[a_{\mathrm s}]\}}{\sum_{l\in\mathcal I}\exp\{m[l]\}}\right]e^{-y},
\end{align*}
 leading to the new state
\begin{align}
\label{mnew}
 \vect m_{k+1}=\vect m +  J(y) \boldsymbol\delta[a_{\mathrm s}],
\end{align}
where $\boldsymbol\delta[a_{\mathrm s}]{=}[\delta[a_{\mathrm s},x]]_{\forall  x \in\mathcal I}$
 is the vector with entries $\delta[a_{\mathrm s},x]$.
\\
\textbf{Reward function:}
the reward is the probability of choosing a beam index such that {$A_{\mathrm d}=X$}
in the data communication phase, so that correct alignment is achieved, yielding
\begin{align}
\label{reward}
r_k(\vect m,a) =
\begin{cases}
0, &{k\in\mathcal T_{\mathrm{BA}}},\\
\frac{\exp\{m[a]\}}{\sum_{l\in\mathcal I}\exp\{m[l]\}},\  &k=L.
\end{cases}
\end{align}
We now formulate the value function iteration for the MDP.
\vspace{-4mm}
\subsection{Value Function}
The value function under the optimal policy is given as
\begin{align}
\label{eq:vfunq}
V_k^*(\vect m)&= \max_{a_{\mathrm s} \in{\mathcal I}}  q_k(\vect m, a_{\mathrm s}),
\end{align} 
where $q_k$ is the Q-function under the state-action pair $(\vect m,a)$, defined recursively as
\begin{align*}
q_{L}(\vect m,{A_{\mathrm d}}) = r_{L}(\vect m,{A_{\mathrm d}})= \frac{\exp\{m[{A_{\mathrm d}}]\}}{\sum_{l\in\mathcal I}\exp\{m[l]\}}, 
\end{align*}
and for $k\in\mathcal T_{\mathrm{BA}}$, using \eqref{eq:y_k_pdf},
\begin{align*}
\label{eq:qfuncup}
&  q_k(\vect m,{a_{\mathrm s}})
 {=}\int_{\mathbb R^{|\mathcal I|}}\!\!\!\!\!\!V_{k+1}^*(\vect m^\prime) f(\vect m_{k+1} {=}\vect m^\prime|\vect m_k{=}\vect m,A_k{=}a_{\mathrm s}) \mathrm d \vect m^\prime \\
 &= \int_{0}^{\infty} V_{k+1}^*(\vect m + J(y)\boldsymbol\delta[a_{\mathrm s}]) f(y|\vect m,a_{\mathrm s}) \mathrm d y. \numberthis
\end{align*}
This yields the optimal value function in the data communication phase,
by choosing the beam index with maximum preference
${A_{\mathrm d}}^*=\arg\max_{{A_{\mathrm d}}\in\mathcal I} m[{A_{\mathrm d}}]$,
\begin{align}
\label{eq:valfunNB}
V_{L}^*(\vect m) = \max_{A_{\mathrm d} \in {\mathcal I}} q_{L}(\vect m,A_{\mathrm d}) =   \frac{\exp\{m[{A_{\mathrm d}}^*]\}}{\sum_{l\in\mathcal I}\exp\{m[l]\}}.
\end{align}

In the beam-alignment phase ($k\in\mathcal T_{\mathrm{BA}}$), combining \eqref{eq:vfunq} and \eqref{eq:qfuncup}, we obtain iteratively the value function as
\begin{align*}
V_k^*(\vect m)&= \max_{a_{\mathrm s}\in\mathcal I} 
\int_{0}^{\infty} V_{k+1}^*(\vect m + J(y)\boldsymbol\delta[a_{\mathrm s}]) f(y|\vect m,a_{\mathrm s}) \mathrm d y. 
\end{align*} 
In the following theorem, whose proof is provided in the Appendix, we unveil structural properties of $V_k^*(\vect m)$. We find a lower-bound and an upper-bound to the Q-function
and show that these bounds are optimized by a policy which, in each stage of the beam-alignment phase, selects the beam index with the \emph{second-best} preference.
This result will be the basis for our proposed policy evaluated numerically in  Sec. \ref{sec:numer}.

\begin{theorem}
\label{thm1}
For $k\in\mathcal T_{\mathrm{BA}}$, the Q-function is bounded as
\begin{align*}
\label{eq:qfunLB}
&q_k(\vect m,a_{\mathrm s}){\geq}\numberthis
 q_k^{LB}(\vect m,a_{\mathrm s}){\triangleq}
 \frac{1}{\sum_{l\in\mathcal I}\exp\{m[l]\} } 
 \Biggl[ \xi(a_{\mathrm s};\vect m)
\\
&{+}\exp\left\{ \frac{\min_{ x_i\neq x_j} m[ x_i]{-}\nu m[ x_j]}{1-\nu} \right\}h(\nu) 
\frac{g(\nu){-}[g(\nu)]^{L-k}}{1-g(\nu)} \Biggr],\\
&q_k(\vect m,a_{\mathrm s}){\leq }
q_k^{UB}(\vect m,a_{\mathrm s}){\triangleq}
 \frac{\left[ 1+h(\nu) \right]^{L-k-1} }{\sum_{l\in\mathcal I}\exp\{m[l]\} }  \xi(a_{\mathrm s};\vect m),\!\! \numberthis
 \label{eq:qfunUB}
\end{align*}
where we have defined $\xi(a_{\mathrm s};\vect m)$
\begin{align}
\label{eq:xifuc}
\triangleq\!\!\begin{cases}
\exp\{m[a_{\mathrm s}]\},\ \ \ 
\text{if }\max_{{\hat a }\neq a_{\mathrm s}} m[{\hat a }]{-}m[a_{\mathrm s}]{<}\ln \nu,
\\
\exp\{\max_{{\hat a } \neq a_{\mathrm s}} m[{\hat a }]\}\\\ 
+ h(\nu) \exp\left\{\frac{m[a_{\mathrm s}]- \nu \max_{{\hat a }\neq a_{\mathrm s}} m[{\hat a }] }{1-\nu}\right\},
\ 
\text{otherwise,}
\end{cases}\!\!\!\!
\end{align}
where
\begin{align}
&h(\nu) \triangleq \exp\left\{ \frac{\nu}{1-\nu} \ln \nu\right\}- \exp\left\{\frac{\ln \nu}{1-\nu}\right\} >0,\\
&g(\nu) \triangleq \exp\left\{\frac{\ln \nu}{1-\nu}\right\}\left[ \frac{1}{\nu+1}- \frac{\ln\nu}{1 - \nu}\right]>0.
\end{align}

Let $ x_{[1]}, x_{[2]},\dots, x_{[|\mathcal I|]}$ be an ordering of beam indices in decreasing order of preference, i.e.,
$m[ x_{[1]}]\geq m[ x_{[2]}]\geq\cdots,m[ x_{[|\mathcal I|]}]$, then
the optimal value function is bounded as
\begin{align*}
\label{eq:valfun_lb}
&\!\!\!V_k^*(\vect m) \geq 
{\max_{a_{\mathrm s}\in\mathcal I} q_k^{LB}(\vect m,a_{\mathrm s})
=q_k^{LB}(\vect m, x_{[2]})},\ \forall k{\in}\mathcal T_{\mathrm{BA}},\numberthis
%\\&
%\geq\frac{1}{\sum_{l\in\mathcal I}\exp(m[l])}  \Biggl[\exp
%m[x_{[1]}]\}\\
%&{+}\exp\!\left\{\!\frac{\min_{ x_i{\neq}x_j} \{ m[ x_i]{-}\nu m[ %x_j]\}}{1-\nu}\!\right\}\!h(\nu) 
%\frac{1{-}[g(\nu)]^{L-k}}{1-g(\nu)} \Biggr], 
\\
\label{eq:valfun_ub}
&\!\!\!V_k^*(\vect m) \leq  {\max_{a_{\mathrm s}\in\mathcal I} q_k^{UB}(\vect m,a_{\mathrm s})
=q_k^{UB}(\vect m, x_{[2]})},\ \forall k{\in}\mathcal T_{\mathrm{BA}}, \numberthis
%\\
%&=   \frac{e^{\max_{\hat a} m[\hat  n]} }{\sum_{l\in\mathcal I}
%\exp(m[l]) } \Bigl[1+h(\nu)\Bigr]^{L-k}, \numberthis
\end{align*} 
with the maximizer of $q_k^{UB}$ and $q_k^{LB}$ given by the \emph{second-best} beam index $x_{[2]}$.
\end{theorem}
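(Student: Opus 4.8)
The plan is to prove the two-sided $Q$-bound by backward induction on $k$ after an algebraic simplification that turns the value iteration into a tractable form. Writing $D(\vect m)\triangleq\sum_{l\in\mathcal I}\exp\{m[l]\}$ and using $e^{J(y)}=\nu e^{(1-\nu)y}$, a direct computation shows $f(y|\vect m,a_{\mathrm s})=\frac{D(\vect m+J(y)\boldsymbol\delta[a_{\mathrm s}])}{D(\vect m)}e^{-y}$ for all $y\ge0$. Substituting this into \eqref{eq:qfuncup} and setting $\tilde q_k(\vect m,a)\triangleq D(\vect m)q_k(\vect m,a)$, $\tilde V_k(\vect m)\triangleq D(\vect m)V_k^*(\vect m)=\max_a\tilde q_k(\vect m,a)$, the recursion collapses to $\tilde q_k(\vect m,a_{\mathrm s})=\int_0^\infty\tilde V_{k+1}(\vect m+J(y)\boldsymbol\delta[a_{\mathrm s}])e^{-y}\,\mathrm dy$ for $k\in\mathcal T_{\mathrm{BA}}$, with $\tilde V_L(\vect m)=\exp\{m[x_{[1]}]\}=\max_l\exp\{m[l]\}$. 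In these terms the theorem asserts $\xi(a_{\mathrm s};\vect m)+\Phi(\vect m)h(\nu)\frac{g(\nu)-g(\nu)^{L-k}}{1-g(\nu)}\le\tilde q_k(\vect m,a_{\mathrm s})\le[1+h(\nu)]^{L-k-1}\xi(a_{\mathrm s};\vect m)$, where $\Phi(\vect m)\triangleq\exp\{\frac{1}{1-\nu}\min_{x_i\neq x_j}(m[x_i]-\nu m[x_j])\}$ depends only on $\vect m$ (the minimum being attained at the least- and most-preferred indices). Since the prefactors $[1+h(\nu)]^{L-k-1}$ and $\Phi(\vect m)h(\nu)\frac{g-g^{L-k}}{1-g}$ do not depend on $a_{\mathrm s}$, the maximizer claims \eqref{eq:valfun_lb}--\eqref{eq:valfun_ub} follow from the $Q$-bounds once I establish the combinatorial fact $\max_{a_{\mathrm s}\in\mathcal I}\xi(a_{\mathrm s};\vect m)=\xi(x_{[2]};\vect m)$.

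For that fact: for $a_{\mathrm s}\neq x_{[1]}$ one has $\max_{\hat a\neq a_{\mathrm s}}m[\hat a]=m[x_{[1]}]\ge m[a_{\mathrm s}]>m[a_{\mathrm s}]+\ln\nu$, so $\xi(a_{\mathrm s};\vect m)$ lies in the second branch of \eqref{eq:xifuc} and is strictly increasing in $m[a_{\mathrm s}]$, hence maximized over $a_{\mathrm s}\neq x_{[1]}$ at $a_{\mathrm s}=x_{[2]}$; it then remains to compare $\xi(x_{[1]};\vect m)$ with $\xi(x_{[2]};\vect m)$, which is immediate if $\xi(x_{[1]};\vect m)$ is in the first branch (it equals $\exp\{m[x_{[1]}]\}<\xi(x_{[2]};\vect m)$) and otherwise, writing $\Delta\triangleq m[x_{[1]}]-m[x_{[2]}]\in[0,-\ln\nu]$ and using the closed form $h(\nu)=\nu^{1/(1-\nu)}(1-\nu)/\nu$, reduces to the scalar claim $e^\Delta-1\ge h(\nu)\bigl(e^{\Delta/(1-\nu)}-e^{-\nu\Delta/(1-\nu)}\bigr)$ on $[0,-\ln\nu]$. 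For the base case $k=L-1$, $\tilde V_L(\vect m+J(y)\boldsymbol\delta[a_{\mathrm s}])=\max(\exp\{m[a_{\mathrm s}]+J(y)\},\exp\{\max_{\hat a\neq a_{\mathrm s}}m[\hat a]\})$; splitting the $y$-integral at the crossover point and evaluating the resulting elementary exponential integrals gives $\tilde q_{L-1}(\vect m,a_{\mathrm s})=\xi(a_{\mathrm s};\vect m)$ exactly, which matches both bounds since $\frac{g-g^{1}}{1-g}=0$.

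For the inductive step, assume the bounds at $k+1$; writing $\vect m'\triangleq\vect m+J(y)\boldsymbol\delta[a_{\mathrm s}]$ and $x_{[1]}',x_{[2]}'$ for its two most preferred indices, the combinatorial fact gives $\xi(x_{[2]}';\vect m')+\Phi(\vect m')h(\nu)\frac{g-g^{L-k-1}}{1-g}\le\tilde V_{k+1}(\vect m')\le[1+h(\nu)]^{L-k-2}\xi(x_{[2]}';\vect m')$. Substituting into the recursion, using $\xi(x_{[2]}';\vect m')=\exp\{m'[x_{[1]}']\}+h(\nu)\exp\{\frac{m'[x_{[2]}']-\nu m'[x_{[1]}']}{1-\nu}\}$ and the base-case identity $\int_0^\infty\exp\{m'[x_{[1]}']\}e^{-y}\,\mathrm dy=\xi(a_{\mathrm s};\vect m)$, the step reduces to two one-step inequalities: (i) the upper estimate $\int_0^\infty\exp\{\frac{m'[x_{[2]}']-\nu m'[x_{[1]}']}{1-\nu}\}e^{-y}\,\mathrm dy\le\xi(a_{\mathrm s};\vect m)$, which I would prove by writing the integrand as an explicit piecewise exponential in $y$ (breakpoints where $m[a_{\mathrm s}]+J(y)$ overtakes the two largest remaining preferences), integrating each branch against $e^{-y}$ (several pieces become constant in $y$), and verifying the resulting inequality in each of the three configurations --- each reducing by elementary means to a bound of the form $\zeta(K-\ln\zeta)\le1$ on $\zeta\in(0,1]$ with $K\le1$ or to a monotonicity check with maximum $\le1$; and (ii) the lower estimate $\int_0^\infty\Phi(\vect m')e^{-y}\,\mathrm dy\ge g(\nu)\Phi(\vect m)$, which follows from the pointwise bound $\Phi(\vect m')\ge\Phi(\vect m)\min(e^{J(y)/(1-\nu)},e^{-\nu J(y)/(1-\nu)})$ (a short manipulation of the minimum over index pairs defining $\Phi$) together with the elementary identity $\int_0^\infty\min(e^{J(y)/(1-\nu)},e^{-\nu J(y)/(1-\nu)})e^{-y}\,\mathrm dy=g(\nu)$. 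Combining (i)--(ii) and telescoping via $1+\frac{g-g^{L-k-1}}{1-g}=\frac{1-g^{L-k-1}}{1-g}$ recovers the geometric factor $\frac{g-g^{L-k}}{1-g}$ for the lower bound; the upper bound accumulates one factor $1+h(\nu)$ per step from (i).

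The bulk of the work --- and the step most likely to be delicate --- is the casework in the one-step upper estimate (i), which pins down the constant $1+h(\nu)$, together with the scalar inequality $e^\Delta-1\ge h(\nu)(e^{\Delta/(1-\nu)}-e^{-\nu\Delta/(1-\nu)})$ on $[0,-\ln\nu]$ needed for the combinatorial fact; both rest on a handful of monotonicity/convexity properties of $h(\nu)$, $g(\nu)$ and $\nu^{1/(1-\nu)}$ on $(0,1)$, notably the estimate $\nu^{1/(1-\nu)}\le\nu/(1+\nu)$ (equivalently $h(\nu)\le(1-\nu)/(1+\nu)$), which I would obtain by showing $(1+\nu)\nu^{\nu/(1-\nu)}\le1$, i.e.\ that $-\ln\nu+\nu-1-\frac{(1-\nu)^2}{1+\nu}\ge0$ on $(0,1)$.
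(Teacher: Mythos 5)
Your proposal is correct and follows essentially the same route as the paper's proof: your identity $f(y|\vect m,a_{\mathrm s})=\tfrac{D(\vect m+J(y)\boldsymbol\delta[a_{\mathrm s}])}{D(\vect m)}e^{-y}$ is exactly the paper's cancellation \eqref{eq:sumter} (just packaged as a normalization $\tilde q_k = D q_k$), and the base case, the combinatorial fact $\arg\max_{a_{\mathrm s}}\xi(a_{\mathrm s};\vect m)=x_{[2]}$ with the scalar inequality on $\Delta\in[0,-\ln\nu]$, the pointwise bound on $\min_{x_i\neq x_j}(m'[x_i]-\nu m'[x_j])$, the integral identity for $g(\nu)$, and the telescoping all match the paper's induction. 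The only stylistic difference is your one-step upper estimate (i), where the proposed piecewise casework is unnecessary: since $m'[x_{[2]}']\le m'[x_{[1]}']$ gives $\exp\{\tfrac{m'[x_{[2]}']-\nu m'[x_{[1]}']}{1-\nu}\}\le \exp\{m'[x_{[1]}']\}$ pointwise, (i) follows in one line from your base-case identity, which is how the paper obtains the factor $1+h(\nu)$.
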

\begin{proof}
 The proof is provided in the Appendix.
\end{proof}
As a result of this Theorem, both the upper and lower bounds of the Q-function
are maximized by the \emph{second-best} beam index policy, which selects the beam index with the second-best preference during the beam-alignment phase. This policy will be evaluated numerically in the next section, against other MAB-based schemes proposed in the literature.
\begin{comment}
\begin{theorem}
For $k \in \mathcal T$, the optimal value function is lower-bounded and upper-bounded as
\begin{align*}
\label{eq:valfun_lb}
&V_k^*(\vect m) \geq \frac{1}{\sum_{l\in\mathcal I}\exp(m[l])}  \Biggl[{\exp\left(\max_{\hat a} m[\hat  n]\right) }\\
& + {\exp\left(\frac{\min_{ x_i\neq x_j} \{ m[ x_i]- \nu m[ x_j]\}}{1-\nu}\right)} h(\nu) \sum_{n=0}^{L-k-1} g^n(\nu) \Biggr], \numberthis
\end{align*}
\begin{align*}
\label{eq:valfun_ub}
&V_k^*(\vect m) \leq  \frac{\exp\left(\max_{\hat a} m[\hat  n]\right) }{\sum_{l\in\mathcal I}\exp(m[l])} \Bigl[1+h(\nu)\Bigr]^{L-k}, \numberthis
\end{align*}
where 
\begin{align}
h(\nu) \triangleq \exp\left( \frac{\nu}{1-\nu} \ln \nu \right)- \exp\left( \frac{\ln \nu}{1-\nu} \right) >0,
\end{align}
\begin{align}
g(\nu) \triangleq \exp\left(\frac{\ln \nu}{1-\nu}\right) \left[ \frac{1}{\nu+1}- \frac{\ln\nu}{1 - \nu}\right]>0.
\end{align}
\end{theorem}
\end{comment}
\vspace{-3mm}
\section{Numerical Results}
\label{sec:numer}
In this section, we evaluate the performance of the \emph{second-best} beam index selection scheme ($a_{\mathrm s}{=}x_{[2]}$) with analog beamforming, and compare it with three other schemes. The first one is based on LTS, a popular MAB scheme \cite{Sutton}. In LTS, at each slot the action is chosen according to the belief distribution, i.e., $a_{\mathrm s}{\sim}b[x]$. The second scheme is based on scanning the most-likely beam index ($a_{\mathrm s}{=}x_{[1]}$) as proposed in \cite{Javidi} (first-best). The third scheme is based on UCB as proposed in \cite{MAB}.
We evaluate the performance of these three schemes in terms of the probability of alignment and spectral efficiency using Monte-Carlo simulation with $10^5$ iterations for each simulated point, with parameters as follows: $M_{\mathrm t}{=}128$, $M_{\mathrm r}{=}1$, $N_0{=}{-}174\mathrm{dBm/Hz}$, $\wtot{=}200\mathrm{MHz}$, $\tfrm{=}20\mathrm{ms}$, $T_s{=}0.1\mathrm{ms}$, $f_c{=}30\mathrm{GHz}$, $d{=}10$m, $[\text{path loss exponent}]{=}2$. 
The BS uses $M_{\mathrm t}{=}128$ antennas and partitions the AoD space into $32$ sectors, each with a beamwidth of 
$\pi/32\mathrm{rad}$ and with uniform prior $b_0[ x] = 1/32,\ \forall  x \in\mathcal I$; the UE is isotropic, hence it uses $M_{\mathrm r}{=}1$ antenna with a single sector. 
We use the beamforming design proposed in \cite{noh} for ULAs with antenna spacing $d_{\mathrm t}{=}\lambda/2$. 
With this configuration, the main-lobe and side-lobe gains are best approximated by $G \approx 14\mathrm{dB}$, $g\approx-11\mathrm{dB}$.
\begin{figure}[!t]
\centering
\includegraphics[width=0.9\linewidth]{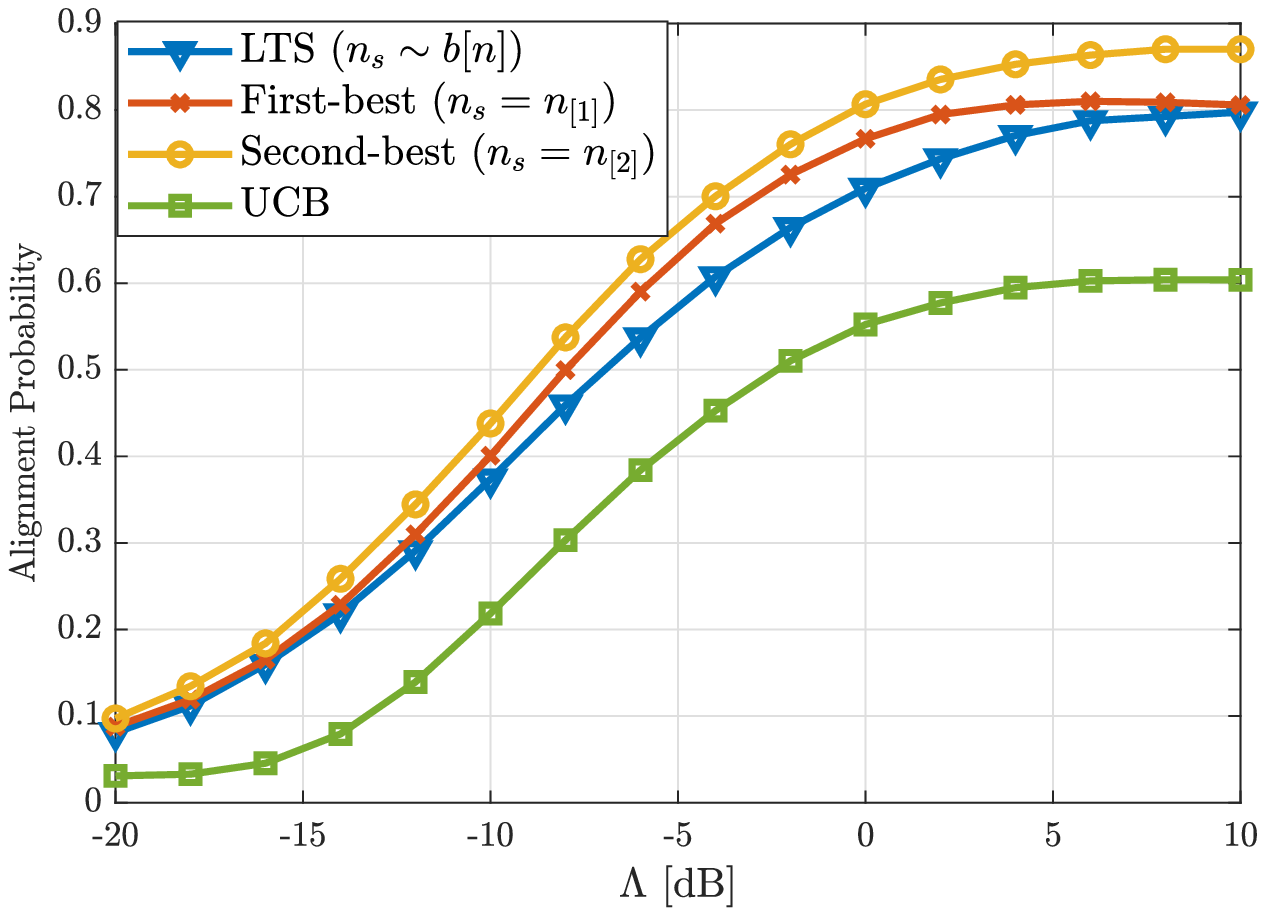}
\caption{Alignment Probability vs $\Lambda$; $L = 32$ (beam-alignment takes $16\%$ of frame duration).}
\label{fig:paln_snr}
\vspace{-8mm}
\end{figure}
\par In Fig.~\ref{fig:paln_snr}, we depict the probability of alignment achieved by the aforementioned schemes versus the pre-beamforming SNR 
$\Lambda$.
 It can be observed that second-best has better performance than the other three schemes, with up to 7\%, 10\%, and 30\% performance gains compared to first-best, LTS-based and UCB-based schemes. The performance gain of second-best is attributed to a better exploration-exploitation trade-off.
  The first-best scheme suffers from poor exploration since it "greedily" chooses the beam index most likely to succeed, but fails to test other beams that may be under-explored, and is thus prone to make alignment errors.
On the other hand, LTS-based scheme suffers from poor exploitation since it may scan least likely beams.       
The proposed \emph{second-best} scheme, on the other hand, strikes a favorable trade-off between exploration and exploitation: instead of greedily choosing the most likely beam, it chooses the second most likely one, leading to better exploration than \emph{first-best}; simultaneously, by not choosing beam pairs that are unlikely to succeed, it leads to  a better exploitation compared to the LTS-based and UCB-based schemes. Finally, compared to UCB,  second-best is better tailored to the structure of the model, since it aims to maximize the alignment probability \emph{at the end} of the beam-alignment phase (see \eqref{reward}), rather than the surrogate metric of UCB -- the cumulative SNR accrued \emph{during} beam-alignment.
%%%%%%%
\par In Fig.~\ref{fig:rate_NB}, we depict the spectral efficiency against the fraction of $\tfrm$ used for BA $L T_{\mathrm s}/\tfrm$. We fix the SNR for beam-alignment as $\Lambda=0\mathrm{dB}$ and the data-communication power as $ P_{\mathrm d}^*{=}22\mathrm{dBm}$. Similar to {Fig.~\ref{fig:paln_snr}}, \emph{second-best} outperforms the three other schemes, owing to improved alignment. The spectral efficiency  is maximized at a unique maximizer $L^*$: it increases initially with $L {\leq} L^* $ as the beam-alignment probability improves with $L$. However,  as $L$ increases beyond $L^*$, this gain is offset by the increased overhead and reduced duration of the data communication phase.  
\begin{figure}[!t]
\centering
\includegraphics[width=0.9\linewidth]{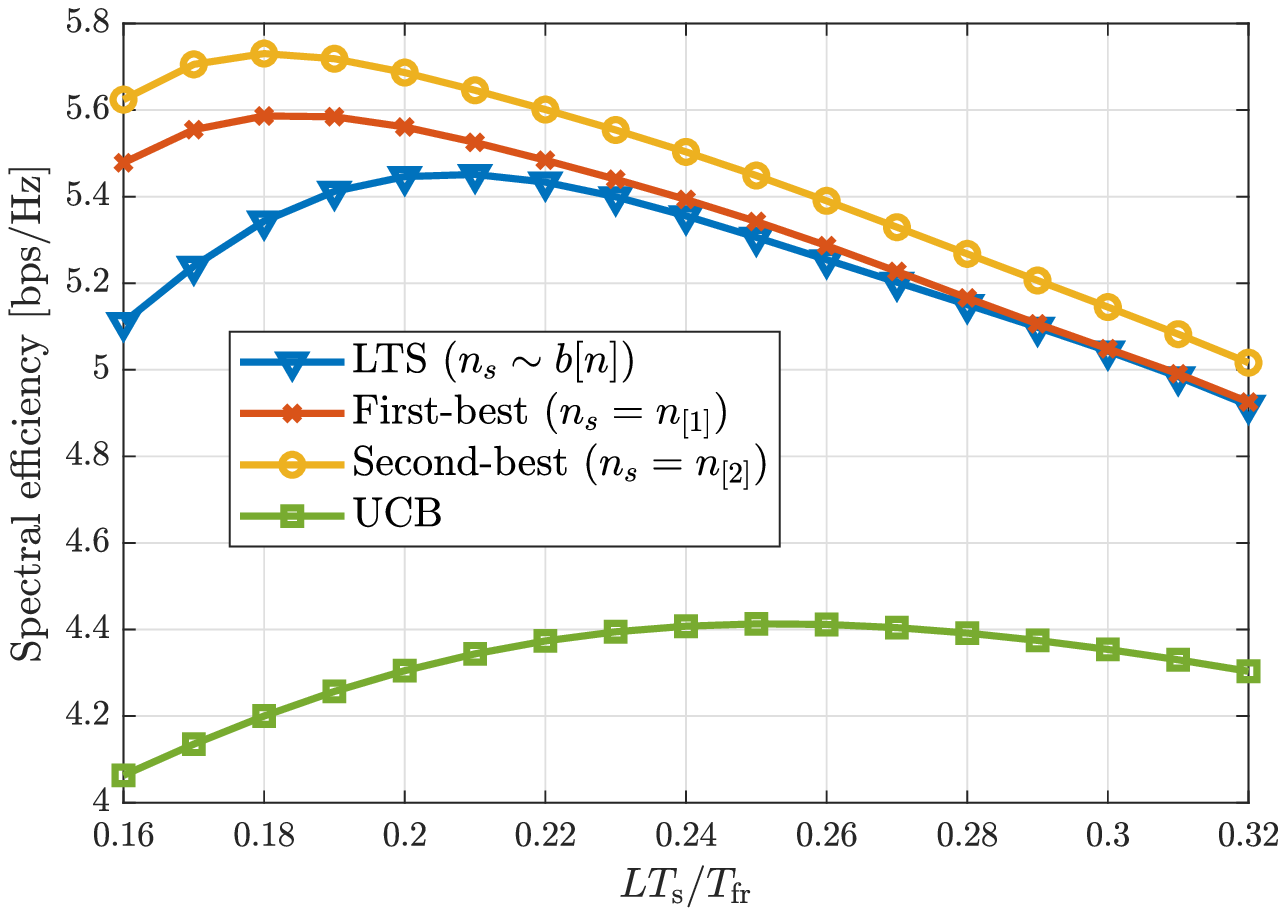}
\caption{Spectral efficiency vs fraction of $\tfrm$ used for BA  $L T_{\mathrm s}/T_{\mathrm{fr}}$.}
\label{fig:rate_NB}
\vspace{-6mm}
\end{figure}
\vspace{-5mm}
{\section{Conclusions}
\label{sec:conc}
In this paper, we have formulated the beam-alignment problem as a Bayesian MAB problem. For the optimal control  
design, we have identified a simplified sufficient statistic referred to as the preference of beam pairs. Based on the preference and bounding of the value function, we have proposed a heuristic policy, which selects the beam pair with the second best-preference to scan. We have shown numerically that the proposed scheme outperforms the first-best, LTS, and UCB based beam-alignment schemes proposed in the literature.   
\vspace{-4mm}
\section*{Appendix: Proof of Theorem \ref{thm1}}
\label{app1}
\begin{proof}
We prove the theorem using induction. Notice that from  the definition of Q-function \eqref{eq:qfuncup} and the optimal value function expression \eqref{eq:valfunNB} for $k=L$ , we get
\begin{align*}
&q_{L-1}(\vect m, a_{\mathrm s}) =  \int_{0}^{\infty}\frac{e^{\max_{\hat a}{ m^\prime[\hat a|\vect m,a_{\mathrm s},y]}} }{\sum_{l\in\mathcal I}e^{
{m^\prime[l|\vect m,a_{\mathrm s},y]}
}} f(y|\vect m,a_{\mathrm s})  dy, 
\end{align*}
where we have defined
the preference update \eqref{mnew} as
$$
m^\prime[ x|\vect m,a_{\mathrm s},y]
=m[x]+J(y)\delta[a_{\mathrm s},x].
$$
Moreover, using \eqref{eq:y_k_pdf} and \eqref{eq:Jfun} we note that
\begin{align}
\label{eq:sumter}
{\sum_{l\in\mathcal I}e^{m^\prime[l|\vect m,a_{\mathrm s},y]}}
&{=}
\sum_{l\in\mathcal I}e^{m[l]{+}J(y)\delta[a_{\mathrm s},l]} 
{=}e^{y}f(y|\vect m{,}a_{\mathrm s}){\sum_{l\in\mathcal I}e^{m[l]} }.
%%%
\end{align}
This yields
\begin{align*}
\label{eq:qfunLS}
q_{L-1}(\vect m, a_{\mathrm s})
&\stackrel{(a)}{=}  \frac{1}{\sum_{l\in\mathcal I}e^{m[l]}} \int_{0}^{\infty}\!\!\!\!\!\!  {e^{\max_{\hat a} m[\hat  a] +J(y) \delta[a_{\mathrm s},\hat a]} } e^{-y} \mathrm d y \\
&\stackrel{(b)}{=}  \frac{1}{\sum_{l\in\mathcal I}e^{m[l]}} \xi(a_{\mathrm s}; \vect m)
\numberthis
\end{align*}
where (b) follows by evaluating the integral in (a) for the two cases in \eqref{eq:xifuc}, and noting that it is given by  $\xi(a_{\mathrm s};\vect m)$.
Using Lemma \ref{lemma2} and \eqref{eq:qfunLS}(b), the optimal value function becomes
\begin{align*}
V_{L-1}^*(\vect m) %&= \frac{1}{\sum_{l\in\mathcal I}e^{m[l]}}  %\max_{a_{\mathrm s}} \xi(a_{\mathrm s}; \vect m)\\
&= \frac{1}{\sum_{l\in\mathcal I}e^{m[l]}} \left[e^{m[ x_{[1]}] } +  h(\nu) e^{\frac{m[ x_{[2]}]- \nu m[ x_{[1]}]}{1-\nu}} \right].
\end{align*}
 Thus, the theorem statement holds for $k{=}L-1$ with equality. Assume it holds for $k{+}1$. 
 {Using Lemma \ref{lemma2},} we can bound
 $$
\max_{\hat a}\xi(\hat a;\vect m^\prime[ n|\vect m,a_{\mathrm s},y] )
 \geq \exp\{ \max_{\hat a} m^\prime[ \hat a|\vect m{,}a_{\mathrm s},y] \}
 $$$$+ h(\nu) e^{ \frac{ \min_{x_i \neq x_j} m^\prime[ x_i|\vect m{,}a_{\mathrm s}{,}y]{-} \nu m^\prime[ x_j|\vect m,a_{\mathrm s},y]}{1-\nu}}.$$
Using {\eqref{eq:qfuncup}}, the induction hypothesis \eqref{eq:valfun_lb} for $k{+}1$ and {the} above bound,  {we obtain}
$q_k(\vect m,{a_{\mathrm s}})$
\begin{align*}
\label{eq:qfun_lb1}
&
%{=\int_{0}^{\infty} V_{k+1}^*(\vect m + J(y)\boldsymbol
%\delta[a_{\mathrm s}]) f(y|\vect m,a_{\mathrm s}) \mathrm d y}\\
%&
%{
%\geq  \frac{1}{\sum_{l\in\mathcal I}e^{m[l]}} 
%\int_{0}^{\infty}e^{-y}\mathrm d y
% \Biggl[
% \max_{\hat a}
%  \xi(\hat a;\vect m + J(y)\boldsymbol\delta[a_{\mathrm s}])
%{+}\exp\left\{ \frac{\min_{ x_i\neq x_j} m^\prime[x_i|\vect %m,a_{\mathrm s},y]{-}\nu m^\prime[ x_j|\vect m,a_{\mathrm s},y]}{1-
%\nu} \right\}h(\nu) 
%\frac{g(\nu){-}[g(\nu)]^{L-k-1}}{1-g(\nu)} \Biggr],
%}
{\geq}\!\!\int_{0}^{\infty}\!\!\Biggl\{ \frac{e^{\max_{\hat a}{ m^\prime[{\hat a}|\vect m,a_{\mathrm s},y]}} }{\sum_{l\in\mathcal I}e^{
{ m^\prime[l|\vect m,a_{\mathrm s},y]}
}} + 
\frac{
e^{
\frac{
\min\limits_{ x_i\neq x_j}
{
 m^\prime[ x_i|\vect m,a_{\mathrm s},y]
-\nu m^\prime[ x_j|\vect m,a_{\mathrm s},y]}}
{1-\nu}}}
{\sum_{l\in\mathcal I}e^{{ m^\prime[l|\vect m,a_{\mathrm s},y]}}} 
\\ 
&\quad\times h(\nu) \frac{1-[g(\nu)]^{L-k-1}}{1-g(\nu)} \Biggr\} f(y|\vect m,a_{\mathrm s})\ \mathrm d y. \numberthis
\end{align*}

Moreover, we note that
\begin{align*}
\label{eq:min_ter}
&\min_{ x_i\neq x_j}
m^\prime[ x_i|\vect m,a_{\mathrm s},y]
-\nu m^\prime[ x_j|\vect m,a_{\mathrm s},y]
\\
%&
%=\min_{ x_i\neq x_j}  m[ x_i]- \nu m[ x_j] + J(y)\{\delta[a_{\mathrm %s}, x_i]-\nu \delta[a_{\mathrm s}, x_j] \} \\
&\geq \min_{ x_i\neq x_j} [m[ x_i]{-} \nu m[ x_j]]{+} \min_{ x_i\neq x_j} J(y)\{\delta[a_{\mathrm s}, x_i]{-}\nu \delta[a_{\mathrm s}, x_j] \}\\
&=\min_{ x_i\neq x_j} [m[ x_i]- \nu m[ x_j]]+\min\{J(y),-\nu J(y)\}.\numberthis
\end{align*}
By substituting \eqref{eq:min_ter} and \eqref{eq:sumter} into \eqref{eq:qfun_lb1}, yields
\begin{align*}
\label{eq:qfun_lb22}
&q_k(\vect m,{a_{\mathrm s}}){\geq}\frac{1}{\sum\limits_{l\in\mathcal I}e^{m[l]} }\! \Biggl[\! \int_{0}^{\infty}\!\!e^{\max_{\hat a} m[\hat a]{+}J(y) \delta[a_{\mathrm s}{,}{\hat a}]}  e^{-y} \mathrm d y \\
&{+}e^{ \frac{\min_{ x_i\neq x_j} m[ x_i]{-}\nu m[ x_j]}{1-\nu}}\!\!\int_{0}^{\infty}e^{\frac{\min\{J(y),-\nu J(y)\}}{1-\nu}}e^{-y} \mathrm d y  
\\ &\quad\times
 h(\nu)\frac{1-[g(\nu)]^{L-k-1}}{1-g(\nu)} \Biggr]. \numberthis
\end{align*}
The first integral in \eqref {eq:qfun_lb22} is equal to $\xi(a_{\mathrm s};\vect m)$ and the second integral is found to be equal to
\begin{align*}
&\int_{0}^{\infty} e^{\frac{\min\{J(y),-\nu J(y)\}}{1-\nu}}e^{-y} \mathrm d y=  e^{\frac{\ln \nu}{1-\nu}} \left[ \frac{1}{\nu{+}1}{-} \frac{\ln\nu}{1 {-} \nu}\right] = g(\nu){>}0.
\end{align*}
Upon substituting these integrals into \eqref{eq:qfun_lb22}  yields the following lower-bound to the Q-function,
\begin{align*}
&
{q_k(\vect m,a_{\mathrm s})}
{\geq}\frac{\xi(a_{\mathrm s};\vect m){+} e^{\frac{\min_{ x_i\neq x_j}\!\!\! m[ x_i]- \nu m[ x_j]}{1-\nu} } h(\nu) \frac{g(\nu){-}[g(\nu)]^{L-k}}{1-g(\nu)} }{\sum\limits_{l\in\mathcal I}e^{m[l]}}\!{,} 
\end{align*}
{which proves the induction step \eqref{eq:qfunLB}, and}
whose maximization (see Lemma \ref{lemma2}) yields \eqref{eq:valfun_lb}.

Similarly, using the induction hypothesis {\eqref{eq:valfun_ub}} for $k+1$ and the upper-bound
\begin{align*}
\max_{\hat a}\xi(\hat a;\vect m^\prime[\vect m,a_{\mathrm s},y] )
 \leq (1+h(\nu)\exp\{ \max_{\hat a} m^\prime[ \hat a|\vect m{,}a_{\mathrm s},y],
\end{align*}
  we obtain the following upper-bound to the Q-function,
\begin{align*}
%\label{eq:qfunub}
&q_k(\vect m,a_{\mathrm s}){\leq}\frac{\left[ 1{+}h(\nu)\right]^{L-k-1} }{\sum_{l\in\mathcal I}e^{m[l]} }\!\!\!\int_{0}^{\infty}\!\!\!{e^{\max_{\hat a} m[\hat  a] +J(y) \delta[a_{\mathrm s}, \hat a]} } e^{-y} \mathrm d y . %\numberthis
\end{align*}
The integral above is equal to $\xi(a_{\mathrm s};\vect m )$,
{which proves the induction step \eqref{eq:qfunUB},} hence
\begin{align*}
\label{eq:vk_ub}
V_k^*(\vect m)
{=}\max_{a_{\mathrm s}\in\mathcal I}q_k(\vect m,a_{\mathrm s})
 {\leq}\frac{[1{+}h(\nu)]^{L-k-1}\!\!\!\!\!\!\!\!\!\!\!\!\!\!}{\sum_{l\in\mathcal I}e^{m[l]} } \max_{a_{\mathrm s} \in\mathcal I}\xi(a_{\mathrm s};\vect m). \numberthis
\end{align*}
Noting that $\max_{a_{\mathrm s}} \xi(a_{\mathrm s};\vect m) = \xi(x_{[2]};\vect m)$ (see Lemma \ref{lemma2}), and  upon substitution in \eqref{eq:vk_ub} yields \eqref{eq:valfun_ub}. 
\end{proof}
\vspace{-5mm}
\begin{lemma}
\label{lemma2}
We have that $\arg\max_{a_{\mathrm s} \in \mathcal I} \xi(a_{\mathrm s};\vect m) =  x_{[2]}$ and
%Let $\xi(a_{\mathrm s};\vect m)$ be the function defined in \eqref{eq:xifuc}, then
\begin{align}
\label{eq:xifunmax}
\!\!\!\max_{a_{\mathrm s}\in \mathcal I} \xi(a_{\mathrm s};\vect m){=}e^{m[ x_{[1]}] }{+}h(\nu) e^{\frac{m[ x_{[2]}]{-}\nu m[ x_{[1]}]}{1-\nu}}.\!
\end{align}
\end{lemma}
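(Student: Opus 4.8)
The plan is to work directly from the closed form \eqref{eq:xifuc} of $\xi(a_{\mathrm s};\vect m)$, which already partitions the action space into the ``if'' and ``otherwise'' regimes, so no further manipulation of the integral in \eqref{eq:qfunLS} is needed. First I would dispose of every action $a_{\mathrm s}\neq x_{[1]}$ at once: if $a_{\mathrm s}\neq x_{[1]}$, then $x_{[1]}$ is a competitor of $a_{\mathrm s}$, so $\max_{\hat a\neq a_{\mathrm s}}m[\hat a]=m[x_{[1]}]\geq m[a_{\mathrm s}]$; since $\nu\in(0,1)$ gives $\ln\nu<0$, the ``if'' condition $\max_{\hat a\neq a_{\mathrm s}}m[\hat a]-m[a_{\mathrm s}]<\ln\nu$ fails, and \eqref{eq:xifuc} yields $\xi(a_{\mathrm s};\vect m)=e^{m[x_{[1]}]}+h(\nu)\exp\{(m[a_{\mathrm s}]-\nu m[x_{[1]}])/(1-\nu)\}$. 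As $h(\nu)>0$ and $1/(1-\nu)>0$, this is strictly increasing in $m[a_{\mathrm s}]$, and among $a_{\mathrm s}\neq x_{[1]}$ the largest preference is $m[x_{[2]}]$, attained at $a_{\mathrm s}=x_{[2]}$. Hence $\max_{a_{\mathrm s}\neq x_{[1]}}\xi(a_{\mathrm s};\vect m)=\xi(x_{[2]};\vect m)$, and substituting $a_{\mathrm s}=x_{[2]}$ into the displayed expression gives exactly \eqref{eq:xifunmax}.

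It then remains to show $\xi(x_{[2]};\vect m)\geq\xi(x_{[1]};\vect m)$, the only genuinely non-trivial point. Applying \eqref{eq:xifuc} with $a_{\mathrm s}=x_{[1]}$, whose top competitor is $x_{[2]}$: if $m[x_{[2]}]-m[x_{[1]}]<\ln\nu$, then $\xi(x_{[1]};\vect m)=e^{m[x_{[1]}]}<e^{m[x_{[1]}]}+h(\nu)e^{(m[x_{[2]}]-\nu m[x_{[1]}])/(1-\nu)}=\xi(x_{[2]};\vect m)$ and we are done; otherwise $\xi(x_{[1]};\vect m)=e^{m[x_{[2]}]}+h(\nu)\exp\{(m[x_{[1]}]-\nu m[x_{[2]}])/(1-\nu)\}$. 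Setting $c\triangleq\tfrac12(m[x_{[1]}]+m[x_{[2]}])$ and $d\triangleq\tfrac12(m[x_{[1]}]-m[x_{[2]}])\geq0$, a short calculation collapses the difference to $\xi(x_{[2]};\vect m)-\xi(x_{[1]};\vect m)=2e^{c}[\sinh d-h(\nu)\sinh(\kappa d)]$ with $\kappa\triangleq(1+\nu)/(1-\nu)>1$, while the regime constraint $m[x_{[2]}]-m[x_{[1]}]\geq\ln\nu$ reads precisely $d\leq d_{\max}\triangleq-\tfrac12\ln\nu$. Thus the lemma reduces to the scalar inequality $\sinh d\geq h(\nu)\sinh(\kappa d)$ for $d\in[0,d_{\max}]$.

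For that scalar inequality I would show $d\mapsto\sinh d/\sinh(\kappa d)$ is decreasing on $(0,\infty)$: since $x\mapsto x\coth x$ is increasing (its derivative equals $(\tfrac12\sinh 2x-x)/\sinh^2 x>0$) and $\kappa d>d$, we get $\kappa\coth(\kappa d)>\coth d$, so the derivative of $\ln\sinh d-\ln\sinh(\kappa d)$ in $d$ is negative; hence the minimum over $[0,d_{\max}]$ is attained at $d_{\max}$. Evaluating there with $d_{\max}=-\tfrac12\ln\nu$ gives $\sinh d_{\max}=(1-\nu)/(2\sqrt\nu)$ and $\sinh(\kappa d_{\max})=\tfrac12(\nu^{-\kappa/2}-\nu^{\kappa/2})$; the power of $\nu$ in the ratio $(\sinh d_{\max}/\sinh(\kappa d_{\max}))/h(\nu)$ cancels identically, leaving $\sinh d_{\max}/\sinh(\kappa d_{\max})=h(\nu)/(1-\nu^{\kappa})>h(\nu)$. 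This closes the argument, with no induction required.

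I expect the only real obstacle to be spotting the change of variables $(m[x_{[1]}],m[x_{[2]}])\mapsto(c,d)$ that exposes the $\sinh$ structure of the $x_{[1]}$-versus-$x_{[2]}$ comparison, together with the bookkeeping behind the exponent cancellation at $d_{\max}$; one should note in particular that $\xi$ is \emph{not} pointwise (in $y$) maximized by $x_{[2]}$, so this comparison genuinely needs the closed forms and cannot be settled by a pointwise dominance argument. Everything else follows mechanically from \eqref{eq:xifuc} and the sign facts $\nu\in(0,1)$, $h(\nu)>0$.
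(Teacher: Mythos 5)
Your proof is correct and follows essentially the same route as the paper's: first eliminate every $a_{\mathrm s}\neq x_{[1]}$ by monotonicity of the ``otherwise'' branch of \eqref{eq:xifuc}, then compare $\xi(x_{[2]};\vect m)$ with $\xi(x_{[1]};\vect m)$ by reducing to a scalar inequality in the preference gap and evaluating a decreasing ratio at the endpoint $\Delta=-\ln\nu$ --- indeed, under the substitution $d=\Delta/2$ the paper's ratio $(e^{\Delta}-1)/\bigl(e^{\Delta/(1-\nu)}-e^{-\nu\Delta/(1-\nu)}\bigr)$ is exactly your $\sinh d/\sinh(\kappa d)$, and your endpoint value $h(\nu)/(1-\nu^{\kappa})$ matches the paper's bound $h(\nu)/\bigl(e^{-\frac{1+\nu}{1-\nu}\ln\nu}-1\bigr)+h(\nu)$. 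The one place you go beyond the paper is that you actually prove the monotonicity step (via $x\coth x$ increasing), which the paper merely asserts.
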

\begin{proof}
To show that $\arg\max_{a_{\mathrm s} \in\mathcal I} \xi(a_{\mathrm s};\vect m) =  x_{[2]}$, we proceed as follows. Clearly, if $a_{\mathrm s}\in \{ x_{[2],}x_{[3]},\ldots, x_{[|\mathcal I|]} \}$, then 
$\max_{{\hat a }\neq a_{\mathrm s}} m[{\hat a }]{-}m[a_{\mathrm s}]
=m[x_{[1]}]{-}m[a_{\mathrm s}]\geq 0>\ln(\nu)$, hence
\begin{align*}
\xi(a_{\mathrm s};\vect m){=}e^{m[x_{[1]}]}{+}h(\nu) e^{\frac{m[a_{\mathrm s}]- \nu m[x_{[1]}] }{1-\nu}},
% \geq \xi( x_{[j]};\vect m), j=3,\ldots, |\mathcal I|.
 \end{align*} 
maximized at $a_{\mathrm s}{=}x_{[2]}$.
 Therefore, we restrict $a_{\mathrm s} \in \{  x_{[1]}, x_{[2]}\}$ without loss in performance. Next, we show that $\xi( x_{[2]};\vect m)\geq\xi( x_{[1]};\vect m)$.
  {Let $\Delta{\triangleq} m[ x_{[1]}]{-}m[ x_{[2]}]$. If $\Delta{>}{-}\ln \nu$, then} $\xi( x_{[1]};\vect m)=e^{m[x_{[1]}]}$ and $\xi( x_{[2]};\vect m) > \xi( x_{[1]};\vect m)$.
Otherwise, % ($0 < \Delta\leq -\ln \nu$), we find that
 \begin{align*}
 \xi( x_{[2]};\vect m){-}\xi( x_{[1]};\vect m){\propto} 
\frac{e^{\Delta}{-}1}{ e^{\frac{\Delta}{1{-}\nu}}{-} e^{-\nu\frac{\Delta}{1{-}\nu}}}
{-} h(\nu)\triangleq u(\Delta,\nu).
 \end{align*}
 {Note that $u(\Delta,\nu)$ is decreasing in $\Delta{\in}(0,-\ln \nu],{\forall} \nu {\in} (0,1)$, minimized at 
 $\Delta{=}-\ln \nu$, yielding, after algebraic steps,}
 {
  \begin{align*}
 \xi( x_{[2]};\vect m){-}\xi( x_{[1]};\vect m){\propto} u(\Delta,\nu)\geq
\frac{h(\nu)
}{ e^{-\frac{1{+}\nu}{1{-}\nu}\ln \nu}{-} 1}>0.
 \end{align*}
 }
In both cases, $\max_{a_{\mathrm s}} \xi( a_{\mathrm s} ) = \xi( x_{[2]})$. Upon substitution of $a_{\mathrm s} =  x_{[2]}$ in \eqref{eq:xifuc}, yields \eqref{eq:xifunmax}.
\end{proof}
%%
%\vspace{-2mm}
\bibliographystyle{IEEEtran}
\bibliography{IEEEabrv,biblio} 

\end{document}